\newcommand\R{\mathrm{R}}
\DeclareMathOperator*{\argmin}{arg\,min}
\newtheorem{theorem}{Theorem}
\newtheorem{lemm}{Lemma}
\newtheorem{Pro}{Proposition}
\def\blue{\textcolor{blue}}
\def\red{\textcolor{red}}
\def\yellow{\textcolor{yellow}}
\def\tran{^{\mbox{\scriptsize T}}}
\newcommand\blfootnote[1]{%
  \begingroup
  \renewcommand\thefootnote{}\footnote{#1}%
  \addtocounter{footnote}{-1}%
  \endgroup
}
\begin{document}

\sloppy

 
 




\title{On the Age-Optimality of Relax-then-Truncate Approach under Partial Battery Knowledge  in Energy Harvesting IoT Networks}





\author{
\IEEEauthorblockN{Mohammad Hatami\IEEEauthorrefmark{1}, Markus Leinonen\IEEEauthorrefmark{1}, and Marian Codreanu\IEEEauthorrefmark{2}}
}

\maketitle

\begin{abstract}
We consider an
energy harvesting (EH) IoT network, where users make on-demand requests to a cache-enabled edge node to send status updates about various random processes, each monitored by an EH sensor. 
The edge node serves users' requests by either commanding the corresponding sensor to send a fresh status update or retrieving the most recently received measurement from the cache.
We aim to find a control policy at the edge node that minimizes the average on-demand AoI over all sensors
subject to {per-slot transmission and energy constraints} under {partial battery knowledge} at the edge node.
Namely, the limited radio resources (e.g., bandwidth) causes that only a limited number of sensors can send status updates at each time slot (i.e., per-slot transmission constraint) and the scarcity of energy for the EH sensors imposes an energy constraint.
{Besides, the edge node is informed of the sensors' battery levels only via received status update packets, leading to uncertainty about the battery levels for the decision-making.}
We develop a low-complexity algorithm -- termed {relax-then-truncate} -- and prove that it is {asymptotically optimal} as the number of sensors goes to infinity.
Numerical results illustrate that the proposed method achieves significant gains over a request-aware greedy policy and show that it has near-optimal performance even for moderate numbers of sensors. 
\end{abstract}



\sloppy




\section{Introduction}
\blfootnote{\IEEEauthorrefmark{1}Centre for Wireless Communications, University of Oulu, Finland. \\\IEEEauthorrefmark{2}Department of Science and Technology, Link\"{o}ping University, Sweden.\\
{{
This research has been financially supported by the Infotech Oulu, the Academy of Finland (grant 323698), and Academy of Finland 6G Flagship program (grant 346208). The work of M. Leinonen has also been financially supported in part by the Academy of Finland (grant 340171). M. Hatami would like to acknowledge the support of Nokia Foundation.}
}
}
{Internet of Things (IoT) is a key technology to connect different devices to enable emergent applications (e.g., smart society \cite{Xu2014IoTSurvey}) with minimal human intervention.}
In IoT sensing networks, sensors measure physical quantities (e.g., speed) and send measurements to a destination for further processing.
To counteract sensors' severe energy limitations, \textit{energy harvesting} (EH), relying on, e.g., solar or RF ambient sources, is often employed. 
Moreover, reliable control actions in time-critical IoT applications (e.g., drone control and industrial monitoring) require high \textit{freshness} of information at the destination, often quantified by the \textit{Age of Information} (AoI) \cite{AoI_Orginal_12}.
To summarize, these emerging applications require designing \textit{AoI-aware status updating control} that both guarantees timely status delivery and accounts for the limited energy resources of EH sensors.

AoI-aware scheduling has been under intensive research over the last few years.
The works \cite{Hsu_modiano2020aoimultiuser_tcm,tang2020CMDPTRUNCATE,Ceran2021MultiUserCMDP,yao2020age_ISIT,gong2020AoI-Random-Arrival,shao2020partially,stamatakis2022semantics} consider a sufficient power source 
whereby an update can be sent any time.
Differently,
\cite{Stamatakis2019control,ceran2021learningEH,leng2019AoIcognitive,Elvina2021SourceDiversityEH,abd2019reinforcement}
consider that the {source nodes} are powered by \textit{energy harvested from the environment}; thus, AoI-aware scheduling is carried out under the energy causality constraint at the source nodes.
Also, while the above works (implicitly) assume that time-sensitive information
is needed at the destination \textit{at all time moments}, \cite{hatami2020aoi,hatami2021spawc,Hatami2022AsymptoticallyOpt,hatami2022JointTcom,hatami2022spawc_partialbattery,chiariotti2021query,Li2021waiting} 
study the information freshness of the source(s) driven by \textit{users' requests}.

We consider an IoT network that consists of multiple EH sensors, a cache-enabled edge node, and multiple users. Users are interested in timely information about physical quantities (e.g., speed or temperature), each measured by a sensor. Users send requests to the edge node which maintains the most recently received measurements from each sensor. To serve a user's request, the edge node either commands the sensor to send a fresh status update or uses the aged measurement from the cache.
{This imposes \textit{a trade-off} between the information freshness and the energy status of the sensors' batteries, creating challenges to the design of an AoI-aware status update control policy herein.}
Due to the limited amount of radio resources (e.g., bandwidth) in an IoT network, we consider that only a portion 
of sensors can send status updates 
at each slot, imposing a \textit{per-slot transmission constraint}.
Furthermore, in contrast to the prior works (e.g., \cite{hatami2020aoi,hatami2022JointTcom,abd2019reinforcement}), we consider a practical scenario where the edge node is informed of the sensors' battery levels only via received status updates, giving rise to \textit{partial battery knowledge} at the edge node.

We aim to find an \textit{optimal policy} (the best action of the edge node at each time slot) that minimizes the average on-demand AoI over all sensors subject to the per-slot transmission and energy constraints under partial battery knowledge at the edge node.
We propose an asymptotically optimal low-complexity algorithm -- termed \textit{relax-then-truncate} -- and show that it performs close to the optimal solution.
As the main novelty, this paper extends the use of the relax-then-truncate approach introduced in \cite{hatami2022JointTcom} to the scenario where the decision-making relies only on partial battery knowledge. 
This introduces substantial additional challenges to the optimization as it combines the notions of a constrained Markov decision process (CMDP) and a partially observable MDP (POMDP).

\section{System Model and Problem Formulation}\label{sec_systemmodel}




\subsection{Network Model}\label{sec_network}
\begin{figure}[t!]
\centering
\includegraphics[width=.98\columnwidth]{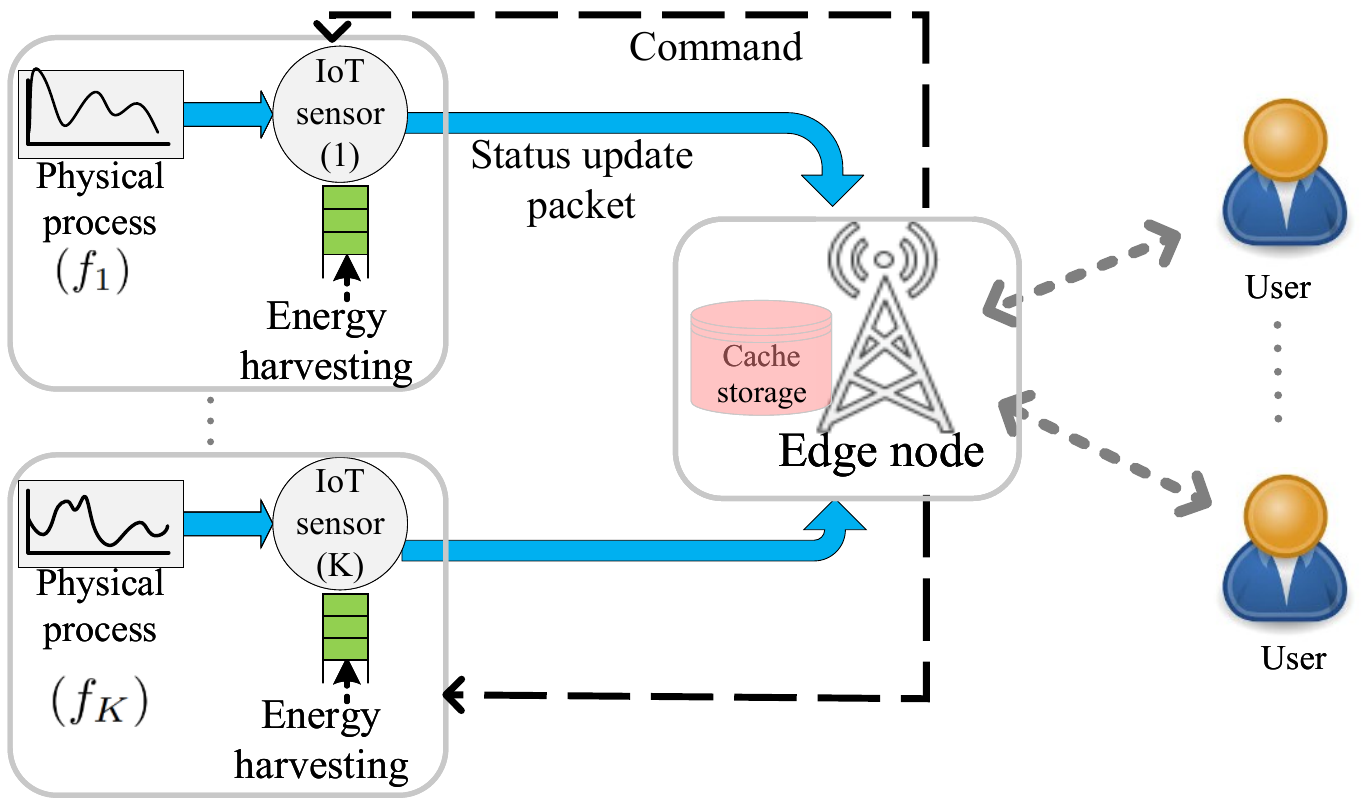}
\vspace{-2mm}
\caption{A multi-sensor IoT sensing network with $K$ EH sensors, an edge node, and users, which are interested in timely status update information of the physical processes measured by the sensors.
}
\label{fig_systemmodel}
\vspace{-5mm}
\end{figure}

We consider a multi-sensor status update system that consists of a set ${\mathcal{K}=\{1,\ldots,K\}}$ of $K$ energy harvesting (EH) sensors, an edge node (a gateway), and users, as depicted in Fig.~\ref{fig_systemmodel}. Users are interested in timely status information about random processes associated with physical quantities $f_k$, e.g., speed or temperature, each of which is independently measured by sensor ${k\in\mathcal{K}}$. 
The edge node provides an interface for the users to communicate with IoT sensors, i.e., the users receive the status updates only via the edge node. 

We consider a time-slotted system with slot indices ${t \in \mathbb{N}}$. At the beginning of slot $t$, users send requests for the status of physical quantities $f_k$ to the edge node. Let $r_{k}(t) \in \{0,1\}$, $t=1,2,\dots$, denote the random process of requesting the status of $f_k$  at the beginning of slot $t$; $r_{k}(t) = 1$ if the status  of $f_k$ is requested and  
$0$ otherwise.
The requests are independent across the sensors and time slots. Let $p_{k}$ be the probability that the status of $f_k$ is requested at a slot, i.e., $\mathrm{Pr}\{r_{k}(t) = 1\} = p_{k}$. 
We assume that all requests that arrive at the beginning of slot $t$ are handled by the edge node during the same slot. 

The edge node is equipped with a \textit{cache} 
that stores the most recently received status update packet from each sensor. Upon receiving a request for the status of $f_k$ at slot $t$, the edge node has two options to serve the request: 1) command sensor $k$ to send a fresh status update, or 2) use the previous measurement from the cache.
{We denote the \textit{command action of the edge node} at slot $t$ by $a_k(t) \in \{0,1\}$;}
$a_k(t)=1$ if the edge node commands sensor $k$ to send an update and 
$0$ otherwise.

We consider that, due to limited amount of radio resources (e.g.,
time-frequency resource blocks), at most ${N < K}$ sensors can transmit status updates to the edge node {within} each slot. This \textit{transmission constraint} imposes a limitation to the number of commands as
\begin{equation}\label{eq_bw_st}
    \textstyle\sum_{k=1}^{K}a_k(t) \leq N,\,\forall{t}.
\end{equation}
{We refer to $N$ as the \textit{transmission budget} hereinafter.}



\subsection{Energy Harvesting Sensors}\label{EH_model}
We assume that the sensors harvest energy from the environment. 
The energy arrivals at the sensors {are modeled} as independent Bernoulli processes $e_k(t) \in \left\lbrace 0 ,1\right\rbrace $, $t = 1,2,\dots$ {(see, e.g., \cite{Stamatakis2019control,pappas2020average})} with rates $\lambda_k$, $k\in\mathcal{K}$. 
Therefore, during each slot, sensor $k$ harvests one unit of energy with probability $\lambda_k$
and stores the energy in a battery with a finite capacity $B$.
We denote the battery level of sensor $k$ at the beginning of slot $t$ by $b_k(t)$, where ${b_k(t) \in \{0,\ldots,B\}}$.



We assume that measuring and transmitting a status update 
to the edge node consumes one unit of energy. 
Once sensor $k$ receives a command from the edge node (i.e., $a_k(t)=1$), it sends a status update if its battery is non-empty (i.e., $b_k(t) \geq 1$). {We denote the \textit{action of sensor $k$} at slot $t$ by $d_k(t) \in \left\lbrace 0 ,1\right\rbrace$;}
$d_k(t)=1$ if sensor $k$ sends a status update to the edge node and 
$0$ otherwise. Thus, 
\begin{equation}\label{eq_d}
d_k(t) =  a_k(t) \mathds{1}_{\{b_k(t) \geq 1\}},
\end{equation}
where $\mathds{1}_{\{\cdot\}}$ is the indicator function. Note that $d_k(t)$ in \eqref{eq_d} {determines} the energy {expenditure} of sensor $k$ at slot $t$. It is also worth noting that by \eqref{eq_d}, we have $d_k(t)\le{a_k(t)}$, and consequently, \eqref{eq_bw_st} implies that
$\sum_{k=1}^{K}d_k(t) \leq N$ for all slots; hence, the name transmission constraint for \eqref{eq_bw_st}. 
Finally, 
the evolution of the battery level of sensor $k$ is {given by}
\begin{equation}\label{eq_battery_evo}
b_k(t+1) = \min\left\lbrace  b_k(t)+e_k(t)-d_k(t) , B \right\rbrace.
\end{equation}




\subsection{Status Updating with Partial Battery Knowledge}\label{sec_system_model_partial_battery}
We model the practical operation mode of the network by considering that the edge node is informed about the sensors' battery level (only) via the received \textit{status update packets}. 
This is in stark contrast to the existing AoI-aware network designs (e.g., \cite{hatami2021spawc,Hatami2022AsymptoticallyOpt,hatami2022JointTcom,abd2019reinforcement}) which assume that the true battery levels are available at the edge node at each slot. 

Each status update packet sent by sensor $k$ consists of the measured value (status) of physical quantity $f_k$, a  time  stamp  representing  the time when the sample was generated, and the current battery level of the sensor. This leads to a situation where the edge node has only \textit{partial} knowledge about the battery level at each slot, i.e.,  \textit{outdated} knowledge based on the sensor’s last update.
Let {$\tilde{b}_k(t) \in \{1,2,\dots,B\}$} denote the \emph{knowledge} about the battery level of sensor $k$ at the edge node at slot $t$. 
At slot $t$, let $u_k(t)$ denote the most recent slot in which the edge node received a status update from sensor $k$, i.e., ${u_k(t) = \max \{t'| t'<t, d_k(t') = 1 \}}$.
Then, the true battery level and the knowledge about the level of sensor $k$ are interrelated as ${\tilde{b}_k(t) = b_k(u_k(t))}$.
In sequel, we 
refer to $\tilde{b}_k(t)$ 
simply as the \textit{partial battery knowledge}.

\subsection{On-demand Age of Information}\label{sec_AoI}
We use \textit{on-demand AoI} \cite{hatami2020aoi} to measure the freshness of information \textit{seen by the users} in our request-based status updating system. 
Let  $\Delta_k(t)$ be the AoI \cite{AoI_Orginal_12} about the physical quantity $f_k$ at the edge node at the beginning of slot  $t$, i.e., the number of slots elapsed since the generation of the most recently  received status update from sensor $k$.
Thus, the AoI about $f_k$ is a random process
$\Delta_k(t) \triangleq t - u_k(t)$.
We make a common assumption (e.g., \cite{Ceran2021MultiUserCMDP,ceran2021learningEH,leng2019AoIcognitive, Elvina2021SourceDiversityEH, Stamatakis2019control,hatami2020aoi,hatami2022JointTcom}) that $\Delta_k(t)$ is upper-bounded by a {sufficiently large} value $\Delta^{\mathrm{max}}$, i.e., $\Delta_k(t) \in \{1, 2,\ldots ,\Delta^{\mathrm{max}}\}$. 
At each slot, the AoI about $f_k$ drops to one if a status update from sensor $k$ is received and otherwise increases by one, i.e.,
\begin{equation}\label{eq_AoI}
\Delta_k(t+1)=
\begin{cases}
1,&\text{if} ~d_k(t)=1, \\
\min \{\Delta_k(t)+1,\Delta^{\mathrm{max}}\},&\text{if}~d_k(t)=0,
\end{cases}
\end{equation}
which is compactly written as $\Delta_k(t+1)=\min \{ (1-d_k(t)) \Delta_k(t)+1,\Delta^{\mathrm{max}}\}$.
We define on-demand AoI associated with sensor $k$ at slot $t$ as the sampled version of \eqref{eq_AoI} where the sampling is controlled by the request process $r_{k}(t)$, i.e.,
\begin{equation}\label{eq-on-demand-AoI}
\begin{array}{ll}
\Delta^\mathrm{OD}_{k}(t) &\hspace{-2mm} \triangleq r_{k}(t) \Delta_k(t+1) \\
&\hspace{-2mm} = r_{k}(t) \min \{ (1-d_k(t)) \Delta_k(t)+1,\Delta^{\mathrm{max}}\}.
\end{array}
\end{equation}
In \eqref{eq-on-demand-AoI}, since the requests come at the beginning of slot $t$ and the edge node sends {measurements} to the users at the end of the same slot, $\Delta_k(t+1)$ is the AoI about $f_k$ seen by the users.



\subsection{POMDP Construction} \label{sec_state_action_cost_def}
\subsubsection{State} Let ${s_k(t) \in \mathcal{S}_k}$ denote the state associated with sensor $k$
at slot $t$, defined as ${s_k(t) = (b_k(t),r_k(t), \Delta_k(t),\tilde{b}_k(t))}$;
$\mathcal{S}_k$ is the per-sensor state space with dimension ${ |\mathcal{S}_k| = 2B(B+1)\Delta^{\mathrm{max}}}$. 
We denote the \textit{observable} part of the state (visible by the edge node) by ${s_k^{\mathrm{v}}(t) = (r_k(t), \Delta_k(t),\tilde{b}_k(t))}$; thus, ${s_k(t) = (b_k(t),s_k^{\mathrm{v}}(t))}$. The state of the system at slot $t$ is expressed as $\mathbf{s}(t) = \left(s_1(t), \dots, s_K(t)\right) \in \mathcal{S}$, $\mathcal{S} = \mathcal{S}_1 \times \dots \times \mathcal{S}_K$; 
$|\mathcal{S}| = \prod_{k =1}^{K}2B(B+1)\Delta^{\mathrm{max}} = (2B(B+1)\Delta^{\mathrm{max}})^K$.

\subsubsection{Action}\label{sec-action-def}
The edge node decides at each slot whether to command sensor $k$ to send a fresh status update (and update the cache) or not, i.e., $a_k(t) \in \mathcal{A}_k =  \{0,1\}$, where $\mathcal{A}_k$ is the per-sensor action space. The action of the edge node at slot $t$ is given by a $K$-tuple
$\mathbf{a}(t) = \big(a_1(t), \dots, a_K(t)\big) \in \mathcal{A}$ with action space
$\mathcal{A}=\big\{(a_1,\ldots,a_K) \mid a_k\in \mathcal{A}_k
,\;\sum_{k=1}^{K}a_k\leq{N}\big\}$; $|\mathcal{A}| = \sum_{m = 0}^{N} \binom{K}{m}$.
Note that $\mathcal{A}$ considers the transmission constraint \eqref{eq_bw_st} in its definition. Additionally, we define the \textit{relaxed} action space that does not consider the transmission constraint \eqref{eq_bw_st} as $\mathcal{A}_{\mathrm{R}} = \mathcal{A}_1 \times \cdots \times \mathcal{A}_K = \{0,1\}^K$;  $|\mathcal{A}_{\mathrm{R}}| = 2^K$.
\subsubsection{Observation}
Let ${o_k(t) \in \mathcal{O}_k}$ be the edge node's \textit{observation} 
associated with sensor $k$ at slot $t$. We define it as the visible part of the state $s_k(t)$, i.e., ${o_k(t) = s_k^\mathrm{v}(t)}$. The observation space $\mathcal{O}$ has a finite dimension $ |\mathcal{O}| = 2B\Delta^\mathrm{max}$. The observation of the system is expressed as $\mathbf{o}(t) = (o_1(t),\dots,o_K(t)) \in \mathcal{O}$, $\mathcal{O} = \mathcal{O}_1,\dots,\mathcal{O}_K$; $|\mathcal{O}| = \prod_{k =1}^{K}2B\Delta^{\mathrm{max}} = (2B\Delta^{\mathrm{max}})^K$.
\subsubsection{Belief-state} 
{As the battery level in per-sensor state ${s_k(t) = (b_k(t),s_k^{\mathrm{v}}(t))}$ is not visible to the edge node, we introduce \textit{belief-states}, which preserve Markov property and are \textit{sufficient information states} \cite[Chapter~7]{sigaud2013markov} in respect to searching for an optimal policy.}
We define {the per-sensor belief-state}
at slot $t$ as ${z_k(t) = \left(\boldsymbol{\beta}_k(t), s_k^{\mathrm{v}}(t)\right) \in {\mathcal{Z}_k(t)}}$, where $\boldsymbol{\beta}_k(t)$ is \textit{belief} about the battery level
$b_k(t)$ and $\mathcal{Z}_k(t)$ is the per-sensor belief-state space; 
{the belief-state of the system at slot $t$ is $\mathbf{z}(t) = (z_1(t),\dots,z_K(t)) \in \mathcal{Z}$, $\mathcal{Z} = \mathcal{Z}_1 \times \dots \times \mathcal{Z}_k$.}
The per-sensor belief at slot $t$ is a $({B+1})$-dimensional vector ${\boldsymbol{\beta}_k(t) = (\beta_{k,0}(t),\dots,\beta_{k,B}(t))\tran{\in \mathcal{B}_k}}$, representing the probability distribution on the possible values of battery levels, where $\mathcal{B}_k \subset \mathbb{R}^{(B+1) \times 1}$ is the per-sensor belief space. 
Let $\phi_k^\mathrm{c}(t)$ be the \textit{complete information state} associated with sensor $k$ at slot $t$, which consists of an initial probability distribution over the states, and the complete \textit{history} of observations and actions up to slot $t$, i.e., $(o_k(1),\dots,o_k(t),a_k(1),\dots,a_k(t-1))$. 
The per-sensor belief $\boldsymbol{\beta}_k(t)$ represents the conditional probability distribution that the battery level of sensor $k$ has a certain value, given 
$\phi^\mathrm{c}_k(t)$. 
Thus, the entries of $\boldsymbol{\beta}_k(t)$ are defined as 
\begin{equation}\label{eq_def_update}
    \beta_{k,j}(t) \triangleq \Pr(b_k(t) = j \mid \phi^\mathrm{c}_k(t)),{~j \in \{0,1,\dots,B\}.}
\end{equation} 

The belief is updated at 
each slot based on the previous belief $\boldsymbol{\beta}_k(t)$, the current observation $o_k(t+1)$, and the previous action $a_k(t)$, i.e., $\boldsymbol{\beta}(t+1) = \tau_k(\boldsymbol{\beta}_k(t),o_k(t+1),a_k(t))$, where the belief update function $\tau_k(\cdot)$ 
is given by \cite[Prop.~1]{hatami2022spawc_partialbattery}.
\subsubsection{Policy}\label{sec_policy_def}
A policy $\pi$ is a rule that determines the action by observing the belief-state.
A randomized policy is a mapping from belief-state $\mathbf{z} \in \mathcal{Z}$ to a \textit{probability distribution} ${\pi(\mathbf{a}|\mathbf{z}) : \mathcal{Z} \times \mathcal{A} \rightarrow \left[ 0 , 1\right]}$, $\sum_{\mathbf{a} \in \mathcal{A}} \pi(\mathbf{a}|\mathbf{z}) = 1$,  of choosing each possible action $\mathbf{a} \in \mathcal{A}$. A deterministic policy is a special case where, in each state $\mathbf{z}$, $\pi(\mathbf{a}|\mathbf{z}) = 1$ for some $\mathbf{a}$; with a slight abuse of notation, we use $\pi(\mathbf{z})$ to denote the action taken in state $\mathbf{z}$  by a deterministic policy $\pi$.
In addition, we define a (relaxed) policy as $\pi_{\mathrm{R}}: \mathcal{Z} \times \mathcal{A}_{\R} \rightarrow \left[ 0 , 1\right]$ and a per-sensor policy as $\pi_k: \mathcal{Z}_k \times  \mathcal{A}_k \rightarrow \left[ 0 , 1\right]$.


\subsubsection{Cost Function}\label{sec_cost}
We define the cost associated with sensor $k$ at slot $t$ as the on-demand AoI for sensor $k$, i.e.,
\begin{equation}\label{persensor_cost}
c_k(t) 
= r_{k}(t) \Delta_k(t+1).
\end{equation}
\subsection{Problem Formulation}\label{subsec_problem_formulation}


For a given policy $\pi$, we define {the average cost as \textit{the average on-demand AoI over all sensors,}
i.e., }
\begin{equation}\label{eq_average_cost}
\bar{C}_{\pi} \triangleq \lim_{T\rightarrow\infty} \textstyle\frac{1}{KT}\sum_{t=1}^{T}\sum_{k=1}^{K} \mathbb{E}_{\pi}[c_{k}(t) \mid \mathbf{z}(0)],
\end{equation}
where $\mathbb{E}_{\pi}[\cdot]$ is the (conditional) expectation when the policy $\pi$ is applied to the system
and $\mathbf{z}(0) = \big(z_1(0),\ldots,z_K(0)\big)$ is the initial belief-state\footnote{{We assume that all policies $\pi$  induce a Markov chain with a single recurrent class plus a (possibly empty) set of transient states. Consequently}, the minimum average cost is independent of the initial state \cite[Chapter.~8]{puterman2014markov}.
}. We aim to find an optimal policy $\pi^\star$ that achieves the minimum average cost, i.e., 
\begin{equation}\label{average_cost_sp1}
(\textbf{P1})~~\pi^\star \in~ \textstyle{\argmin_{\pi}}~\bar{C}_{\pi}.
\end{equation}





{We can model (\textbf{P1}) as a POMDP and derive an optimal policy\footnote{{We can readily extend the approach in \cite{hatami2022spawc_partialbattery} for the multi-sensor scenario.}} $\pi^\star$.
Note, however, that the belief-state space $\mathcal{Z}$ and action space $\mathcal{A}$ grow exponentially in the number of sensors $K$, and thus, the complexity of finding an optimal policy grows exponentially in $K$, thereby being \textit{PSPACE-Hard}. Therefore, we next propose an asymptotically optimal low-complexity algorithm whose complexity increases only \textit{linearly} in $K$.}




\section{Relax-then-Truncate: {Asymptotically Optimal Status Updating}}\label{sec-lowcomplxity_alg}
We start by relaxing 
constraint \eqref{eq_bw_st} into a time average constraint and  model the \textit{relaxed problem} as a \textit{constrained POMDP} (CPOMDP). The CPOMDP problem is transformed into an unconstrained POMDP problem through the Lagrangian approach.
The POMDP problem decouples along the sensors; we find optimal per-sensor policies for a fixed Lagrange multiplier, whereas the optimal Lagrange multiplier is found via bisection. This procedure provides an optimal policy for the relaxed problem, called \textit{optimal relaxed policy}. 
As the final step, we propose an online \textit{truncation} procedure to ensure that constraint \eqref{eq_bw_st} is satisfied at each slot. Our 
analysis shows that this \textit{relax-then-truncate} approach is \textit{asymptotically optimal} as the number of sensors goes to infinity.

\subsection{CPOMDP Formulation}\label{sec_cmdp_modeling}
We define the average number of command actions under a policy $\pi_\R$ as
\begin{equation}\label{eq_average_transmision}
\bar{J}_{\pi_\R} \triangleq \lim_{T\rightarrow\infty}\textstyle\frac{1}{KT} \sum_{t=1}^{T} \sum_{k=1}^{K}\mathbb{E}_{\pi_\R}[a_k(t)],
\end{equation} 
and express the relaxed problem as
\begin{equation}
    \begin{array}{ll}
    (\textbf{P2})~~\pi_{\mathrm{R}}^\star \in \argmin_{\pi_{\mathrm{R}}} & \bar{C}_{\pi_\R}\\
    \hspace{1.8cm}
    \mbox{subject to} & \bar{J}_{\pi_\R} \leq \Gamma,
    \end{array}
    \label{st_opt2}
\end{equation}
where ${\Gamma \triangleq \frac{N}{K}}$ is the normalized transmission budget.
Note that the average cost obtained under $\pi_{\mathrm{R}}^\star$ is a \textit{lower bound} on the average cost obtained under $\pi^\star$, i.e.,
\begin{equation}\label{eq_lowerboundinequality}
 \bar{C}_{\pi^\star_\R} \leq \bar{C}_{\pi^\star}. 
\end{equation}

To solve (\textbf{P2}), we introduce a Lagrange multiplier $\mu$ and define the Lagrangian associated with problem (\textbf{P2}) as
\begin{equation}\label{lagrang_fcn}
    {\mathcal{L}}(\pi_{\mathrm{R}},\mu) \! \triangleq \!\! \lim_{T\rightarrow\infty} \textstyle\frac{1}{KT}\sum_{t=1}^{T}\sum_{k=1}^{K} \mathbb{E}_{\pi_{\mathrm{R}}}[c_k(t) + \mu a_k(t)] - \mu \Gamma.
\end{equation}
For a given ${\mu \geq 0}$, we define
the Lagrange dual function 
${\mathcal{L}^\star({\mu}) = \min_{\pi_{\mathrm{R}}} {\mathcal{L}}({\pi_{\mathrm{R}},\mu})}$. A policy that achieves ${\mathcal{L}}^\star({\mu})$ is called \textit{$\mu$-optimal}, denoted by $\pi_{\R,\mu}^\star$, and it is a solution of the following (unconstrained) POMDP problem
\begin{equation}\label{average_cost_relax_dual_sp2}
(\textbf{P3})~~\pi_{\R,\mu}^\star \in \textstyle\argmin_{\pi_{\mathrm{R}}} {\mathcal{L}}({\pi_{\mathrm{R}},\mu}).
\end{equation}


The optimal value of the CPOMDP problem (\textbf{P2}), $\bar{C}_{\pi^\star_{\R}}$, and the optimal value of the POMDP problem (\textbf{P3}), $\mathcal{L}^\star({\mu})$, ensures the following relation \cite[Corollary~12.2]{altman1999constrained}
\begin{equation}\label{dultiy-relation}
    \bar{C}_{\pi^\star_{\R}} = \textstyle\sup_{\mu \geq 0} 
    {\mathcal{L}}^\star({\mu}).
\end{equation}
Thus, an optimal policy for (\textbf{P2}) is found by a two-stage iterative algorithm:
1) for a given $\mu$, we find a $\mu$-optimal policy, and 2) we update $\mu$ in a direction that obtains  $\bar{C}_{\pi^\star_\R}$ according to \eqref{dultiy-relation}. 
These two steps are detailed in the following.

\subsubsection{Finding $\mu$-optimal Policy} 
\label{sec_opt_fixed_mu}
For a given $\mu$, (\textbf{P3}) is \textit{separable} across the sensors, i.e., (\textbf{P3}) decouples into $K$ per-sensor problems. To this end, we express the Lagrangian in
\eqref{lagrang_fcn} as $\mathcal{L}({\pi_{\mathrm{R}},\mu}) = \textstyle\frac{1}{K} \sum_{k=1}^{K}{\mathcal{L}_k}({\pi_k,\mu}) - \mu \Gamma$,
where 
\begin{equation}\label{per-sensor-lagrang-fcn}
    {\mathcal{L}}_k({\pi_k,\mu}) \!\triangleq\!\! \lim_{T\rightarrow\infty} \frac{1}{T}\textstyle\sum_{t=1}^{T} \mathbb{E}_{\pi_k}[c_k(t) + \mu a_k(t)],~k = 1, \dots, K.\notag
\end{equation}
Thus, finding an optimal policy $\pi_{\R,\mu}^\star$ reduces to finding $K$ per-sensor optimal policies, denoted by $\pi_{\R,\mu,k}^\star$, $k \in \mathcal{K}$, as
\begin{equation}\label{_persensor-average_cost_relax_dual_sp2}
(\textbf{P4})~~\pi_{\R,\mu,k}^\star \in \textstyle\argmin_{\pi_k} {\mathcal{L}}_{k}({\pi_k,\mu}),~k = 1,\dots,K.
\end{equation}
Each sub-problem (\textbf{P4})
can be modeled as an (unconstrained) POMDP problem. Particularly, the POMDP model associated with sensor $k$ is defined as the tuple $(\mathcal{S}_k,\mathcal{O}_k,\mathcal{A}_k,\Pr(s_k(t+1)| s_k(t), a_k(t)),\Pr(o_k(t)| s_k(t), a_k(t-1)),c_k(s_k(t),a_k(t)))$ \cite[Chap.~7]{sigaud2013markov}, 
where $\mathcal{S}_k$, $\mathcal{O}_k$, and $\mathcal{A}_k$ were defined in Section~\ref{sec_state_action_cost_def}, the state transition probabilities ${\Pr(s_k(t+1)|s_k(t),a_k(t))}$ are calculated as shown in {\cite[Section~III, Eq.~5]{hatami2022spawc_partialbattery}}, the observation function is given by $\text{Pr} (o_k(t)| s_k(t),a_k(t-1)) = \mathds{1}_{\{ o_k(t) = s_k^{\mathrm{v}}(t)\}}$, and the cost function is 
$c_k(s_k(t), a_k(t))+\mu a_k(t)$, where $c_k(s_k(t), a_k(t))$ is calculated using \eqref{persensor_cost}.
%
%
By {\cite[Theorem~1]{hatami2022spawc_partialbattery}}, the optimal average cost achieved by $\pi^\star_{\R,\mu,k}$, denoted by $\mathcal{L}_k^\star(\mu)$ (i.e., $\mathcal{L}_k^\star({\mu}) \triangleq \min_{\pi_k}{\mathcal{L}}_{k}({\pi_k,\mu})$)
satisfies the following equations
\begin{equation}\label{eq_persensor_ballman}
  \!\!\!\!\mathcal{L}_{k}^\star({\mu}) + h_{\R,\mu,k}(z)  = \textstyle\min_{a \in \mathcal{A}_k} Q_{\R,\mu,k}(z,a), z \in \mathcal{Z}_k.
\end{equation}
where $h_{\R,\mu,k}(z)$ is a relative value function, and $Q_{\R,\mu,k}(z, a)$ is an action-value function, which, for (per-sensor) belief-state $z = (\boldsymbol{\beta},r,\Delta,\tilde{b}) \in \mathcal{Z}_k$ and action $a \in \{0,1\}$, is given by
\begin{subequations}\label{eq_bellman_q_pomdp}
\begin{align}
    &\hspace{-1mm} Q_{\R,\mu,k}(z , 0)  = r \min\{\Delta+1,\Delta^{\mathrm{max}}\} + \textstyle\sum_{r^\prime = 0}^{1} [r^\prime p + \notag \\
    &(1-r^\prime)(1-p)] h(\boldsymbol{\Lambda}_k \boldsymbol{\beta},r^\prime,\min\{\Delta+1,\Delta^{\mathrm{max}}\}, \tilde{b}), \\
    & \hspace{-1mm} Q_{\R,\mu,k}(z ,1)  = [r \beta_{0}  \min\{\Delta+1,\Delta^{\mathrm{max}}\} + r(1-\beta_{0})] + \beta_0\notag \\ 
    & \hspace{-1mm} \textstyle\sum_{r^\prime = 0}^{1}
    [r^\prime p + (1-r^\prime)(1-p)] h(\boldsymbol{\rho}_k^{0},r^\prime,\min\{\Delta+1,\Delta^{\mathrm{max}}\},\tilde{b})\notag \\
    &\hspace{-1mm} +\textstyle\sum_{j = 1}^{B} \beta_{j} \big[p h(\boldsymbol{\rho}_k^{j},1,1,j) + (1-p) h(\boldsymbol{\rho}_k^{j},0,1,j) \big],
\end{align}
\end{subequations}
{where a left stochastic matrix $\boldsymbol{\Lambda}_k$ and vectors ${\boldsymbol{\rho}_k^{0},\boldsymbol{\rho}_k^{1},\ldots,\boldsymbol{\rho}_k^{B}}$ are constructed as shown in \cite[Proposition~1]{hatami2022spawc_partialbattery}.}
Further, an optimal action in belief-state ${z \in \mathcal{Z}_k}$ is given by
\begin{equation}\label{eq_optimal_policy}
 \pi_{\R,\mu,k}^\star(z) = \textstyle\argmin_{a\in\mathcal{A}_k} Q_{\R,\mu,k}(z,a),~ z\in \mathcal{Z}_k.  
\end{equation}

An optimal policy $\pi_{\R,\mu,k}^\star$ can be found by converting the Bellman's optimality equation \eqref{eq_bellman_q_pomdp} into an iterative procedure, called relative value iteration algorithm (RVIA) \cite[Section~8.5.5]{puterman2014markov}. At each iteration $i = 0,1,\ldots$, we {first} update $Q_{\R,\mu,k}^{(i+1)}(z,a)$ by using $h_{\R,\mu,k}^{(i)}(z)$ in \eqref{eq_bellman_q_pomdp}, and then 
\begin{equation}\label{eq_v_itr}
\begin{array}{ll}
&V_{\R,\mu,k}^{(i+1)}(z) =\min_{a\in\mathcal{A}} Q_{\R,\mu,k}^{(i+1)}(z,a), \\& h_{\R,\mu,k}^{(i+1)}(z) = V_{\R,\mu,k}^{(i+1)}(z) - V_{\R,\mu,k}^{(i+1)}(z_{\mathrm{ref}}),
\end{array}
\end{equation}
where ${z_{\mathrm{ref}} \in \mathcal{Z}_k}$ is an arbitrary reference state.

While the sequences in \eqref{eq_v_itr} converge (regardless of the initialization $h_{\R,\mu,k}^{(0)}(z)$), finding $V_{\R,\mu,k}(z)$ (and $h_{\R,\mu,k}(z)$) iteratively via \eqref{eq_v_itr} is intractable, because the belief space $\mathcal{B}_k$ has infinite dimension. As a solution, we exploit a specific pattern in the beliefs' evolution that allows to truncate the belief space $\mathcal{B}_k$ {into a \textit{finite} belief space $\hat{\mathcal{B}}_k$} and subsequently develop a practical iterative algorithm relying on \eqref{eq_v_itr}, {as detailed in \cite[Sect.~IV]{hatami2022spawc_partialbattery}}. {Moreover, by \cite[Theorem~2]{hatami2022spawc_partialbattery}, $V_{\R,\mu,k}(\cdot)$ is fixed with respect to $\tilde{b}$, and consequently, it does not have any effect on $\pi^\star_{\R,\mu,k}$ in \eqref{eq_optimal_policy}. Thus, $\tilde{b}$ is removed from the belief-state in the algorithm.}
{The proposed iterative algorithm that finds $\mu$-optimal policies is presented in Algorithm~\ref{alg_cmdp_RVIA}~(Lines~\ref{alg-lst-line-function-RVIA}--\ref{alg-last-line-function-RVIA}).}
\subsubsection{Finding the Optimal Lagrange Multiplier}\label{sec_opt_mu}
{Note that}
$\bar{C}_{\pi_{\R,\mu}^\star}$ and ${\mathcal{L}}(\pi_{\R,\mu}^\star,\mu)$ are increasing in $\mu$, whereas $\bar{J}_{\pi_{\R,\mu}^\star}$ is decreasing in $\mu$
\cite[Lemma~3.1]{beutler1985optimal}. Therefore, we seek for 
the smallest value of the Lagrange multiplier such that $\pi_{\R,\mu}^\star$ satisfies the average transmission constraint in \eqref{st_opt2}.
We define the optimal Lagrange multiplier as \cite{beutler1985optimal}
\begin{equation}\label{eq-mu-star}
    \mu^* \triangleq \inf \big\{\mu \geq 0 \mid \bar{J}_{\pi_{\R,\mu}^\star} \leq \Gamma\big\},
\end{equation}
where $\bar{J}_{\pi_{\R,\mu}^\star}$ is the average number of command actions under $\pi_{\R,\mu}^\star$.
From \eqref{eq_average_transmision}
and the fact that (\textbf{P3}) decouples across the sensors, $\bar{J}_{\pi_{\R,\mu}^\star}$ is calculated as $ \bar{J}_{\pi^\star_{\R,\mu}}  = \frac{1}{K}\sum_{k = 1}^{K}\bar{J}_{\pi_{\R,\mu,k}^\star}$, where $\bar{J}_{\pi_{\R,\mu,k}^\star}$ denotes the per-sensor average number of command actions under $\pi_{\R,\mu,k}^\star$, which is defined as
\begin{equation}\label{eq_persensor_average_transmision}
\bar{J}_{\pi_{\R,\mu,k}^\star} \triangleq 
\lim_{T\rightarrow\infty}\textstyle\frac{1}{T}\sum_{t=1}^{T} \mathbb{E}_{\pi_{\R,\mu,k}^\star}[a_k(t)].
\end{equation} 

We now characterize an optimal relaxed policy $\pi^\star_{\R}$ for (\textbf{P2}). If 
$\frac{1}{K}\sum_{k =1}^{K}\bar{J}_{\pi_{\R,\mu^*,k}^\star} = \Gamma$, then, $\pi_{\R,\mu^*,k}^\star$, ${k\in \mathcal{K}}$, form an optimal policy for (\textbf{P2}), i.e., $\pi_{\R}^\star = \pi_{\R,\mu^*}^\star$. 
Otherwise, $\pi_{\R}^\star$ is a mixture of two deterministic policies $\pi_{\R,\mu^{*-}}^\star$ and $\pi_{\R,\mu^{*+}}^\star$, which are defined by \cite[Theorem~4.4]{beutler1985optimal}
\begin{equation}\label{eq-mixingpolicies_def}
    \pi_{\R,\mu^{*-}}^\star  \triangleq \lim_{\mu \rightarrow \mu^{*-}} \pi_{\R,\mu}^\star ~\mathrm{and}~
    \pi_{\R,\mu^{*+}}^\star \triangleq \lim_{\mu \rightarrow \mu^{*+}} \pi_{\R,\mu}^\star,
\end{equation}
and is written symbolically as $\pi^\star_{\R} \triangleq \eta \pi_{\R,\mu^{*-}}^\star + (1-\eta) \pi_{\R,\mu^{*+}}^\star$, where $\eta$ is the mixing factor. This mixed policy is a  stationary randomized policy where the action at each belief-state $\mathbf{z}$ is $\pi_{\R,\mu^{*-}}^\star(\mathbf{z})$ with probability $\eta$ and $\pi_{\R,\mu^{*+}}^\star(\mathbf{z})$ with probability $1 -\eta$, where $\eta$ is obtained\footnote{{As there is no closed-form for $\eta\! \in\! [0,1]$, 
numerical search is used.
}} such that $\bar{J}_{\pi^\star_{\R}} = \Gamma$.


To search for $\mu^*$ as defined in \eqref{eq-mu-star}, we apply {bisection} that exploits the monotonicity of $\bar{J}_{\pi_{\R,\mu}^\star}$ with respect to $\mu$.
Particularly, if $\frac{1}{K}\sum_{k =1}^{K}\bar{J}_{\pi_{\R,\mu,k}^\star} \leq \Gamma$ for ${\mu = 0}$, then the constraint in \eqref{st_opt2} is inactive, and an optimal policy for (\textbf{P2}) is $\pi^\star_{\mathrm{R},0}$. Otherwise, we apply an iterative update procedure until ${|\mu^{+} - \mu^{-}| < \epsilon}$ and $\frac{1}{K}\sum_{k =1}^{K}\bar{J}_{\pi_{\R,\mu,k}^\star} \leq \Gamma$ are satisfied.
{Details are expressed in Algorithm~\ref{alg_cmdp_RVIA} (Lines~\ref{alg-1st-Line}--\ref{alg-Last-Line}).}


\subsection{{Truncation Procedure}}\label{sec_truncation}
When Algorithm~\ref{alg_cmdp_RVIA} has been executed, there is no guarantee that the per-slot constraint \eqref{eq_bw_st} is satisfied under optimal relaxed policy $\pi^\star_{\R}$. Thus, we propose the following truncation procedure that satisfies \eqref{eq_bw_st} at each slot.
At slot $t$, let ${\mathcal{X}(t) = \{k\mid a_k(t) = 1, k \in \mathcal{K}\} \subseteq \mathcal{K}}$ denote the set of sensors that are commanded under $\pi^\star_{\R}$. The truncation step separates into two cases: 1) if ${|\mathcal{X}(t)| \leq N}$, the edge node simply commands all sensors in $\mathcal{X}(t)$, and 2) otherwise, the edge node selects $N$ sensors from the set $\mathcal{X}(t)$ \textit{randomly (uniform)} and commands them to send status updates. 



\algrenewcommand\algorithmicrequire{\textbf{Input}}
\algrenewcommand\algorithmicensure{\textbf{Output}}

\begin{algorithm}[t!]
\begin{small}
\caption{Policy design for the CPOMDP problem (\textbf{P2})}\label{alg_cmdp_RVIA}
\begin{algorithmic}[1]
\State \textbf{Initialize} Set $\mu \gets 0$, $\mu^{-} \gets 0$, $\mu^{+}$ as a large positive number, and determine a small $\epsilon > 0$ \label{alg-1st-Line}
\State \Call{RVIA-POMDP}{$\mu$} \hspace{-0.5cm}\Comment{\textit{run function RVIA-POMDP for $\mu = 0$}}
\If{$\bar{J}_{\pi^\star_{\R,\mu}} \leq \Gamma$}
\State $\pi^\star_{\R} = \pi^\star_{\R,\mu}$
\Else
\While{$|\mu^{+} - \mu^{-}| > \epsilon$}
\State \hspace{-6.5mm}\Call{RVIA-POMDP}{$\frac{\mu^{+} \!+\! \mu^{-}}{2}$} \hspace{-2.3mm}\Comment{\textit{run RVIA-POMDP for $\mu \!\!=\! \frac{\mu^{+} \!+\! \mu^{-}}{2}$}}
\State \hspace{-2.8mm} \textbf{if} $\bar{J}_{\pi^\star_{\R,\mu}} \geq \Gamma$ \textbf{then} $\mu^{-} \gets \mu$ \textbf{else} $\mu^{+} \gets \mu$
\EndWhile
\State  $\mu^* \gets 1/2(\mu^{-}+\mu^{+})$, $\mu^{*-} \gets \mu^{-}$, and $\mu^{*+} \gets \mu^{+}$
\State\hspace{-3mm} \textbf{if} $\bar{J}_{\pi^\star_{\R,\mu}}\!\!\!\!\! =\! \Gamma$ \textbf{then} $\pi^\star_{\R}\! =\! \pi^\star_{\R,\mu^*}$ \textbf{else} $\pi^\star_{\R}\!\! \triangleq\! \eta \pi_{\R,\mu^{*-}}^\star\!\! + \!(1-\eta) \pi_{\R,\mu^{*+}}^\star$
\EndIf 
\State {{\textbf{Output}: optimal relaxed policy $\pi^\star_{\R}$}}
\label{alg-Last-Line}
\vspace*{-.5\baselineskip}\Statex\hspace*{\dimexpr-\algorithmicindent-2pt\relax}\rule{\columnwidth}{0.4pt}
\Function{RVIA-POMDP}{$\mu$}\label{alg-lst-line-function-RVIA}
\Comment{\textit{find optimal policies $\pi^\star_{\R,\mu,k}$ for fixed $\mu$ (i.e., $\mu$-optimal policies)}} \vspace{1.5mm}
\State \hspace{-5mm}\textbf{Initialize} $V_{\R,\mu,k}(z) \leftarrow 0$, $h_{\R,\mu,k}(z) \leftarrow 0$,$~\forall z = \{\boldsymbol{\beta},r,\Delta\}, \boldsymbol{\beta}\in \hat{\mathcal{B}}, r \in  \{0,1\}, \Delta \in \{1,\dots,\Delta^{\mathrm{max}}\}$, determine an arbitrary ${z_{\textrm{ref}} \in \mathcal{Z}}$ and a small threshold ${\theta > 0}$
\State\hspace{-3mm} \textbf{for} $k = 1, \dots, K$ \textbf{do}
\Repeat
\State \hspace{-4mm} calculate $Q_{\R,\mu,k}(z,0)$ and $Q_{\R,\mu,k}(z,1)$ by \eqref{eq_bellman_q_pomdp}, for all $z$
\State \hspace{-4mm} $V_{\mathrm{tmp}}(z) \leftarrow \min_{a \in \mathcal{A}} Q(z,a)$, for all $z$
\State \hspace{-4mm}$\delta\! \gets\! \max_{z} (V_{\textrm{tmp}}(z) - V_{\R,\mu,k}(z)) - \min_{z} (V_{\textrm{tmp}}(z) - V_{\R,\mu,k}(z))$
\State \hspace{-4mm}$V_{\R,\mu,k}(z) \leftarrow V_{\textrm{tmp}}(z)$,  for all $z$
\State \hspace{-2.5mm} $h_{\R,\mu,k}(z) \leftarrow V_{\R,\mu,k}(z) - V_{\R,\mu,k}(z_{\mathrm{ref}})$,  for all $z$
\Until{$\delta < \theta$}
\State $\pi^\star_{\R,\mu,k}(z) = \argmin_{a \in \mathcal{A}} Q(z,a)$, for all $z$
\State \hspace{-3mm} \textbf{end for}
\State \hspace{-3mm}{\textbf{Output}: per-sensor optimal policies $\pi^\star_{\R,\mu,k}, k \in \mathcal{K}$}
\EndFunction \label{alg-last-line-function-RVIA}
\end{algorithmic}
\end{small}
\end{algorithm}

\subsection{Asymptotic Optimality of Relax-then-Truncate Approach}\label{sec_optimality}
We next analyze the optimality of the relax-then-truncate policy, which is denoted by $\tilde{\pi}$.
\vspace{-2mm}
\begin{lemm}\label{lemma_STD}
{Denoting the standard deviation of a random variable $X$ by $\mathrm{STD}(X)$, we have $\mathrm{STD}(|\mathcal{X}(t)|) \leq \sqrt{K}$.} 
\end{lemm}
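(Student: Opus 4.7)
The plan is to express $|\mathcal{X}(t)|$ as a sum of indicator variables and then exploit the separability of the optimal relaxed policy to invoke independence across sensors, which converts the variance of the sum into a sum of variances.

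First, I would write $|\mathcal{X}(t)| = \sum_{k=1}^{K} a_k(t)$, where each $a_k(t) \in \{0,1\}$ is a Bernoulli-type random variable whose law is induced by the optimal relaxed policy $\pi^\star_{\R}$ acting on the belief-state $\mathbf{z}(t)$. The crucial structural observation is that, as shown in Section~III-A, the relaxed problem (\textbf{P2}) decouples across sensors: the Lagrangian splits as $\mathcal{L}(\pi_{\R},\mu)=\tfrac{1}{K}\sum_k \mathcal{L}_k(\pi_k,\mu)-\mu\Gamma$, and $\pi^\star_{\R}$ is obtained from per-sensor policies $\pi^\star_{\R,\mu,k}$ (possibly mixed at $\mu^{*-}$ and $\mu^{*+}$ with factor $\eta$, with per-sensor independent randomization). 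Consequently, $a_k(t)$ depends only on the per-sensor belief-state $z_k(t)$, whose evolution depends only on sensor $k$'s own energy arrival process $e_k(\cdot)$, request process $r_k(\cdot)$, and past command actions $a_k(\cdot)$. Since $\{e_k\}$ and $\{r_k\}$ are independent across $k$ by assumption, I would argue by induction on $t$ that the per-sensor trajectories $(z_k(\cdot),a_k(\cdot))$ are mutually independent across $k$, so that $\{a_k(t)\}_{k=1}^{K}$ are independent at every fixed slot $t$.

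Given independence, the variance of a sum becomes a sum of variances:
\begin{equation}
\mathrm{Var}(|\mathcal{X}(t)|) \;=\; \sum_{k=1}^{K}\mathrm{Var}(a_k(t)).
\end{equation}
Because $a_k(t)\in\{0,1\}$ is Bernoulli with some parameter $q_k(t)\in[0,1]$, we have $\mathrm{Var}(a_k(t))=q_k(t)(1-q_k(t))\le 1$ (indeed $\le 1/4$, but the crude bound suffices). Summing over $k$ yields $\mathrm{Var}(|\mathcal{X}(t)|)\le K$, and taking the square root gives the claimed $\mathrm{STD}(|\mathcal{X}(t)|)\le\sqrt{K}$.

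The main obstacle is the independence step: one must carefully justify that the cross-sensor correlations introduced by (i) the common Lagrange multiplier $\mu^{*}$, (ii) the mixing between $\pi^\star_{\R,\mu^{*-}}$ and $\pi^\star_{\R,\mu^{*+}}$, and (iii) any randomization in the $\mu$-optimal policies, do not couple the $a_k(t)$'s. For (i), $\mu^{*}$ is a deterministic number computed offline, so it induces no randomness; for (ii)--(iii), the randomizations can be implemented with sensor-specific independent coin flips without affecting the average cost or the feasibility, since only the marginal distribution of each $a_k(t)$ matters in $\bar{C}_{\pi_\R}$ and $\bar{J}_{\pi_\R}$. Once this is observed, the rest of the argument is a one-line variance computation.
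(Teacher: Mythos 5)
Your proof is correct and follows essentially the same route as the paper's: write $|\mathcal{X}(t)|=\sum_{k=1}^{K}a_k(t)$ as a sum of independent Bernoulli variables (the paper calls the result a Poisson binomial distribution), sum the per-sensor variances $\omega_k(t)(1-\omega_k(t))\le 1$, and take the square root. The only difference is that you explicitly justify the cross-sensor independence via the decoupled structure of $\pi^\star_{\R}$ and per-sensor randomization, a step the paper simply asserts; that added care is welcome but does not change the argument.
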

\vspace{-2mm}
\textit{Proof}. 
The cardinality of set $\mathcal{X}(t)$ (i.e., the set of sensors that are commanded under $\pi_{\R}^\star$) can be written as ${|\mathcal{X}(t)| = \sum_{k = 1}^{K} a_k(t)}$, where ${a_k(t)\in \{0,1\}}$, ${k\in \mathcal{K}}$, are $K$ independent binary random variables. Therefore, random variable $|\mathcal{X}(t)|$ has a Poisson binomial distribution. Let $\omega_k(t)$ be the probability that sensor $k$ is commanded at slot $t$ under policy $\pi_{\R}^\star$, i.e., $\omega_k(t) \triangleq \Pr(a_k(t) = 1)$. Thus, we have
\begin{equation}\notag
    \mathrm{STD}(|\mathcal{X}(t)|) = \sqrt{\textstyle\sum_{k = 1}^{K} \underbrace{\omega_k(t) (1-\omega_k(t))}_{\leq 1}} {\leq} \sqrt{K}.
\end{equation}
\vspace{-3mm}
\begin{lemm}\label{lemma_STD_VAR_relatiopn}
{Denoting the Mean Absolute Deviation of a random variable $X$ by $\mathrm{MAD}(X)$, we have $ \mathrm{MAD}(X) \leq \mathrm{STD}(X)$.}
\end{lemm}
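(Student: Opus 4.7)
The plan is to prove this classical inequality using either Jensen's inequality or the Cauchy--Schwarz inequality applied to the nonnegative random variable $Y \triangleq |X - \mathbb{E}[X]|$. Recall the definitions $\mathrm{MAD}(X) = \mathbb{E}[|X - \mathbb{E}[X]|] = \mathbb{E}[Y]$ and $\mathrm{STD}(X) = \sqrt{\mathbb{E}[(X - \mathbb{E}[X])^2]} = \sqrt{\mathbb{E}[Y^2]}$. The inequality $\mathrm{MAD}(X) \leq \mathrm{STD}(X)$ is therefore equivalent to showing $\mathbb{E}[Y] \leq \sqrt{\mathbb{E}[Y^2]}$, i.e., $(\mathbb{E}[Y])^2 \leq \mathbb{E}[Y^2]$.

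First, I would observe that this last inequality follows immediately from Jensen's inequality applied to the convex function $\varphi(y) = y^2$, yielding $\varphi(\mathbb{E}[Y]) \leq \mathbb{E}[\varphi(Y)]$. Taking square roots preserves the inequality since both sides are nonnegative, giving $\mathbb{E}[Y] \leq \sqrt{\mathbb{E}[Y^2]}$, which is precisely the claim. As an alternative route, I would note that Cauchy--Schwarz gives $\mathbb{E}[Y \cdot 1] \leq \sqrt{\mathbb{E}[Y^2]} \cdot \sqrt{\mathbb{E}[1^2]} = \sqrt{\mathbb{E}[Y^2]}$, producing the same conclusion. A third, more elementary viewpoint is simply to use the fact that the variance $\mathrm{Var}(Y) = \mathbb{E}[Y^2] - (\mathbb{E}[Y])^2 \geq 0$.

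There is no main obstacle here; the proof is essentially one line. The only subtlety worth flagging is that the inequality holds with equality if and only if $|X - \mathbb{E}[X]|$ is almost surely constant, which for the Poisson binomial random variable $|\mathcal{X}(t)|$ used in the surrounding context is generally not the case. Since Lemma~\ref{lemma_STD_VAR_relatiopn} is stated in full generality for any random variable $X$ with finite second moment, I would present just the Jensen-based one-liner, keeping the exposition minimal and in line with the paper's style of invoking Lemma~\ref{lemma_STD} and Lemma~\ref{lemma_STD_VAR_relatiopn} together to later bound $\mathbb{E}[\,||\mathcal{X}(t)| - \mathbb{E}[|\mathcal{X}(t)|]|\,]$ by $\sqrt{K}$ in the asymptotic optimality analysis of the relax-then-truncate policy.
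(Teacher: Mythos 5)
Your proof is correct and matches the paper's argument exactly: the paper also applies Jensen's inequality to the convex function $f(y)=y^2$ evaluated at $Y=|X-\mathbb{E}[X]|$ to obtain $(\mathbb{E}[Y])^2\leq\mathbb{E}[Y^2]$ and then takes square roots. The Cauchy--Schwarz and variance-nonnegativity alternatives you mention are equivalent reformulations, so there is nothing substantively different here.
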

\vspace{-4mm}
\begin{proof}
Applying the Jensen's inequality for the convex function $f(\cdot) = (\cdot)^2$, i.e., $f(\mathbb{E}[\cdot]) \leq \mathbb{E}[f(\cdot)]$, we have
\begin{align}
& \underbrace{(\mathbb{E}[|X- \mathbb{E}(X)|])^2}_{ = (\mathrm{MAD}(X))^2} \leq \underbrace{\mathbb{E}[|X - \mathbb{E}(X)|^2]}_{= (\mathrm{STD}(X))^2},
\end{align}
which implies 
${\mathrm{MAD}(X) \leq \mathrm{STD}(X)}$.
\end{proof}
\vspace{-3mm}
\begin{theorem}\label{theorem-asymptotically-opt}
For any {normalized transmission budget} $\Gamma > 0$, the relax-then-truncate policy $\tilde{\pi}$ is asymptotically optimal with respect to the number of sensors, i.e., ${\lim_{K\rightarrow\infty} (\bar{C}_{\tilde{\pi}} - \bar{C}_{\pi^\star}) = 0}$.
\end{theorem}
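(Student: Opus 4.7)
My plan is to derive the result via a sandwich argument that combines the lower bound furnished by the relaxed policy with tight control of the cost overhead introduced by the truncation step. First, I would combine \eqref{eq_lowerboundinequality} with the observation that $\tilde{\pi}$ is feasible for (\textbf{P1}) (it satisfies \eqref{eq_bw_st} deterministically by construction of the truncation step) to obtain the sandwich $\bar{C}_{\pi^\star_\R} \leq \bar{C}_{\pi^\star} \leq \bar{C}_{\tilde{\pi}}$. It therefore suffices to show $\bar{C}_{\tilde{\pi}} - \bar{C}_{\pi^\star_\R} \to 0$ as $K \to \infty$.

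Next, I would couple $\tilde{\pi}$ with $\pi^\star_\R$ on common sample paths (identical request processes, energy arrivals, and the randomization used by $\pi^\star_\R$). Under this coupling the two policies prescribe identical actions whenever $|\mathcal{X}(t)| \leq N$; on the remaining slots, exactly $M(t) \triangleq |\mathcal{X}(t)| - N$ of the sensors in $\mathcal{X}(t)$ are denied a command by $\tilde{\pi}$. Since $\pi^\star_\R$ is feasible for (\textbf{P2}), its stationary command count satisfies $\mathbb{E}[|\mathcal{X}(t)|] \leq N$, so $M(t) = (|\mathcal{X}(t)| - N)^+ \leq \bigl||\mathcal{X}(t)| - \mathbb{E}[|\mathcal{X}(t)|]\bigr|$ pointwise, and chaining Lemmas~\ref{lemma_STD_VAR_relatiopn} and \ref{lemma_STD} yields
\[
\mathbb{E}[M(t)] \leq \mathrm{MAD}(|\mathcal{X}(t)|) \leq \mathrm{STD}(|\mathcal{X}(t)|) \leq \sqrt{K}.
\]

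Then I would bound the per-slot cost inflation of $\tilde{\pi}$ relative to $\pi^\star_\R$ by $\Delta^{\max} \cdot \mathbb{E}[M(t)]$, arguing that each denied command can raise the corresponding sensor's on-demand AoI contribution by at most $\Delta^{\max}$ before the next successful command resets it, and charging this surplus to the slot where the truncation occurred. Normalizing by $K$,
\[
\bar{C}_{\tilde{\pi}} - \bar{C}_{\pi^\star_\R} \leq \frac{\Delta^{\max}\,\mathbb{E}[M(t)]}{K} \leq \frac{\Delta^{\max}}{\sqrt{K}} \xrightarrow[K \to \infty]{} 0,
\]
which together with the sandwich proves the claim.

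The main obstacle I foresee is making the per-slot cost-inflation bound watertight: once a truncation occurs the state trajectories of the two coupled policies diverge, so the extra cost attributable to a single missed command propagates into future slots rather than being confined to the slot in which it was incurred. A clean remedy is a regeneration-style accounting that exploits the hard cap $\Delta^{\max}$ on the AoI together with the recurrence of successful commands under $\tilde{\pi}$: the propagated future excess can be pooled onto the slot where the truncation was committed at a bounded exchange rate, which is then absorbed into the constant $\Delta^{\max}$ in the displayed inequality.
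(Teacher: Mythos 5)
Your proposal follows essentially the same route as the paper's proof: the same sandwich $\bar{C}_{\pi^\star_\R}\le\bar{C}_{\pi^\star}\le\bar{C}_{\tilde{\pi}}$, the same charge of at most $\Delta^{\mathrm{max}}$ per denied command confined to the truncation slot (the paper formalizes this with the same penalized-policy device you sketch as a remedy), and the same chain $(|\mathcal{X}(t)|-N)^+\le\bigl||\mathcal{X}(t)|-\mathbb{E}[|\mathcal{X}(t)|]\bigr|$ followed by Lemmas~\ref{lemma_STD_VAR_relatiopn} and~\ref{lemma_STD} to obtain an $O(1/\sqrt{K})$ bound. Your direct accounting of the expected number of truncated sensors even gives a marginally tighter constant ($\Delta^{\mathrm{max}}/\sqrt{K}$ versus the paper's $\Delta^{\mathrm{max}}/(\Gamma\sqrt{K})$), but the argument is the same in substance.
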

\vspace{-2mm}
\begin{proof}
Let ${\mathcal{T}(t) \subset \mathcal{X}(t)}$ denote the set of \textit{truncated} sensors at slot $t$, i.e., the sensors that are not commanded under the relax-then-truncate policy $\tilde{\pi}$ while they are commanded under policy $\pi_{\mathrm{R}}^\star$. 
By the truncation procedure, if ${|\mathcal{X}(t)|> N}$, $N$ sensors are chosen (uniform) randomly from the set $\mathcal{X}(t)$ and commanded; ${|\mathcal{X}(t)| - N}$ sensors are not commanded. The probability that sensor $k$ belongs to $\mathcal{T}(t)$ is $\mathds{1}_{\{|\mathcal{X}(t)|>N|\}}\left(\frac{|\mathcal{X}(t)|-N}{|\mathcal{X}(t)|}\right)$. At each slot, the additional per-sensor cost under $\tilde{\pi}$ compared to $\pi^\star_{\R}$ is at most $\Delta^{\mathrm{max}}$ (see \eqref{persensor_cost}). Therefore, the expected additional cost over all sensors under $\tilde{\pi}$ compared to $\pi^\star_{\R}$ is upper bounded by
${K\Delta^{\mathrm{max}} {\frac{(|\mathcal{X}(t)| - N)^{+}}{|\mathcal{X}(t)|}}}$,
where $(\cdot)^+ \triangleq \max \{0,\cdot\}$.

We introduce the following (penalized) {strategy} $\hat{\pi}_{\mathrm{R}}$: at each slot, command the sensors based on $\pi^\star_{\R}$ but add a penalty $K\Delta^{\mathrm{max}} \frac{(|\mathcal{X}(t)| - N)^{+}}{|\mathcal{X}(t)|}$ to the cost over all sensors.
Clearly, the average cost obtained under $\hat{\pi}_{\mathrm{R}}$ is not less than that obtained by $\tilde{\pi}$, i.e., ${\bar{C}_{\tilde{\pi}} \leq \bar{C}_{\hat{\pi}_{\mathrm{R}}}}$. Also, recall from \eqref{eq_lowerboundinequality} that 
${\bar{C}_{\pi^\star_\R} \leq  \bar{C}_{\pi^\star}}$. Finally, policy $\tilde{\pi}$ is a sub-optimal solution for  (\textbf{P1}), i.e., ${\bar{C}_{\pi^\star} \leq \bar{C}_{\tilde{\pi}}}$. To conclude, we have 
\begin{equation}\label{eq_policy_order_final}
\bar{C}_{\pi^\star_\R} \leq  \bar{C}_{\pi^\star} \leq \bar{C}_{\tilde{\pi}} \leq \bar{C}_{\hat{\pi}_\R}.
\end{equation}

Using \eqref{eq_policy_order_final}, the difference between the average cost obtained by the proposed relax-then-truncate policy $\tilde{\pi}$ and the average cost obtained by an optimal policy $\pi^\star$ is upper bounded as
\begin{equation}\notag
\begin{array}{ll}
    &\bar{C}_{\tilde{\pi}} - \bar{C}_{\pi^\star} \overset{(a)}{\leq} \bar{C}_{\hat{\pi}_\R} - \bar{C}_{\pi^\star_\R}
    \\& =\lim_{T\rightarrow\infty} \frac{1}{KT} \sum_{t=1}^{T}\mathbb{E}_{\pi^\star_\R}\left[K\Delta^{\mathrm{max}} \frac{(|\mathcal{X}(t)| - N)^{+}}{|\mathcal{X}(t)|} \right]
    \\& \overset{(b)}{\leq} \frac{\Delta^{\mathrm{max}}}{N} \lim_{T\rightarrow\infty}\frac{1}{T} \sum_{t=1}^{T} \mathbb{E}_{\pi^\star_\R}\left[(|\mathcal{X}(t)| - N)^+\right]\\
    & \overset{(c)}{\leq} \frac{\Delta^{\mathrm{max}}}{N}  \lim_{T\rightarrow\infty}\frac{1}{T}\sum_{t=1}^{T} \mathbb{E}_{\pi^\star_\R}\left[(|\mathcal{X}(t)| - \mathbb{E}_{\pi^\star_\R}[|\mathcal{X}(t)|])^+\right]\\
    & \overset{(d)}{\leq} \frac{\Delta^{\mathrm{max}}}{N}  \lim_{T\rightarrow\infty}\frac{1}{T}\sum_{t=1}^{T} \underbrace{\mathbb{E}_{\pi^\star_\R}\Big[\big||\mathcal{X}(t)| - \mathbb{E}_{\pi^\star_\R}[|\mathcal{X}(t)|]\big|\Big]}_{= \mathrm{MAD}(|\mathcal{X}(t)|)}
    \end{array}
\end{equation}
\begin{equation}
    \begin{array}{ll}
    & \overset{(e)}{\leq} \frac{\Delta^{\mathrm{max}}}{K\Gamma}  \lim_{T\rightarrow\infty}\frac{1}{T}\sum_{t=1}^{T} \mathrm{STD}(|\mathcal{X}(t)|)\overset{(f)}{\leq} \frac{\Delta^{\mathrm{max}}}{\Gamma\sqrt{K}}\notag
\end{array}
\end{equation}
where $(a)$ follows from \eqref{eq_policy_order_final}, $(b)$ follows from $\frac{(|\mathcal{X}(t)| - N)^+}{|\mathcal{X}(t)|} \leq \frac{(|\mathcal{X}(t)| - N)^+}{N}$, $(c)$ follows from $\mathbb{E}_{\pi^\star_{\R}}[|\mathcal{X}(t)|] \leq N$, {for sufficiently large $t$}, $(d)$ follows from $(\cdot)^{+} \leq |\cdot|$, {$(e)$ follows from Lemma~\ref{lemma_STD_VAR_relatiopn}, 
and $(f)$ follows from Lemma~\ref{lemma_STD}. Therefore, we have 
${\lim_{K\rightarrow\infty} (\bar{C}_{\tilde{\pi}} - \bar{C}_{\pi^\star}) = 0}$, {which concludes the proof.}}
\end{proof}
\section{Numerical Results}\label{sec_simulation}
We consider a scenario where ${p_k = 0.8}$, ${\Delta^{\mathrm{max}} = 64}$, and ${B = 3}$.
{Each sensor is assigned an energy harvesting rate $\lambda_k$ from the set $\{0.01,0.02,\dots, 0.1\}$ sequentially:} sensors ${1,11,\ldots}$ have the rate $0.01$, sensors ${2,12,\ldots}$ have the rate $0.02$, and so on.
%
The following benchmarks are used for comparison.
1) A (request-aware) greedy policy where the edge node commands at most $N$ sensors with the largest AoI from the set $\mathcal{W}(t) \triangleq \{k \mid r_k(t) = 1, k\in\mathcal{K}\}$, i.e., the set of sensors whose status are requested by a user,
2) The lower bound, obtained by following an optimal relaxed policy $\pi^\star_{\R}$ (see \eqref{eq_lowerboundinequality}), and 
3) the case where the edge node knows the exact battery levels at each slot and which uses the relax-then truncate approach to find an asymptotically optimal policy  \cite{hatami2022JointTcom}.

\newcommand\fw{.8}
\begin{figure}[t!]
\centering
\subfigure [$\Gamma = 0.02$]{%
\includegraphics[width=\fw \columnwidth]{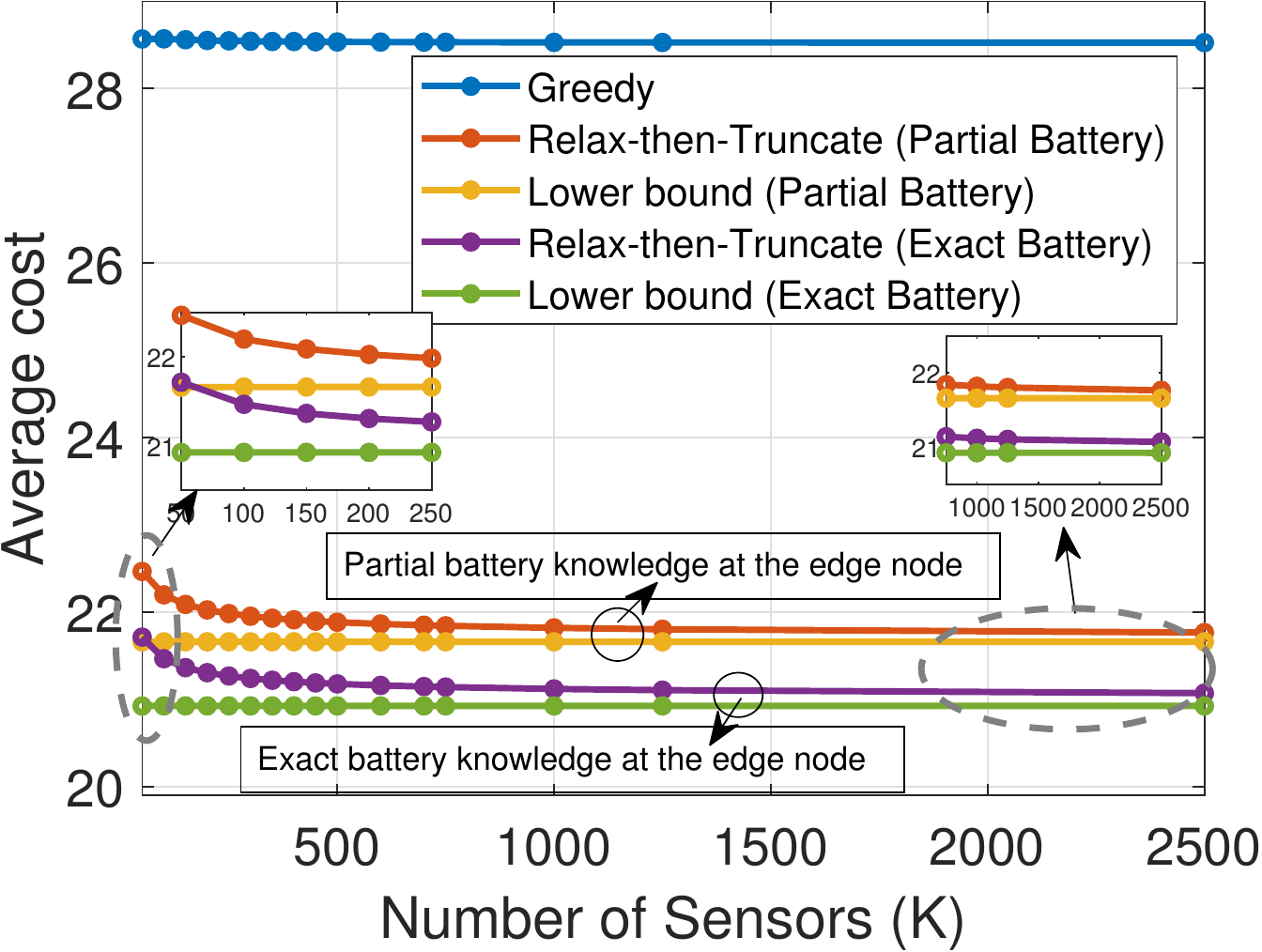}
}\vspace{-2mm}
\subfigure [$\Gamma = 0.15$]{%
\includegraphics[width=\fw \columnwidth]{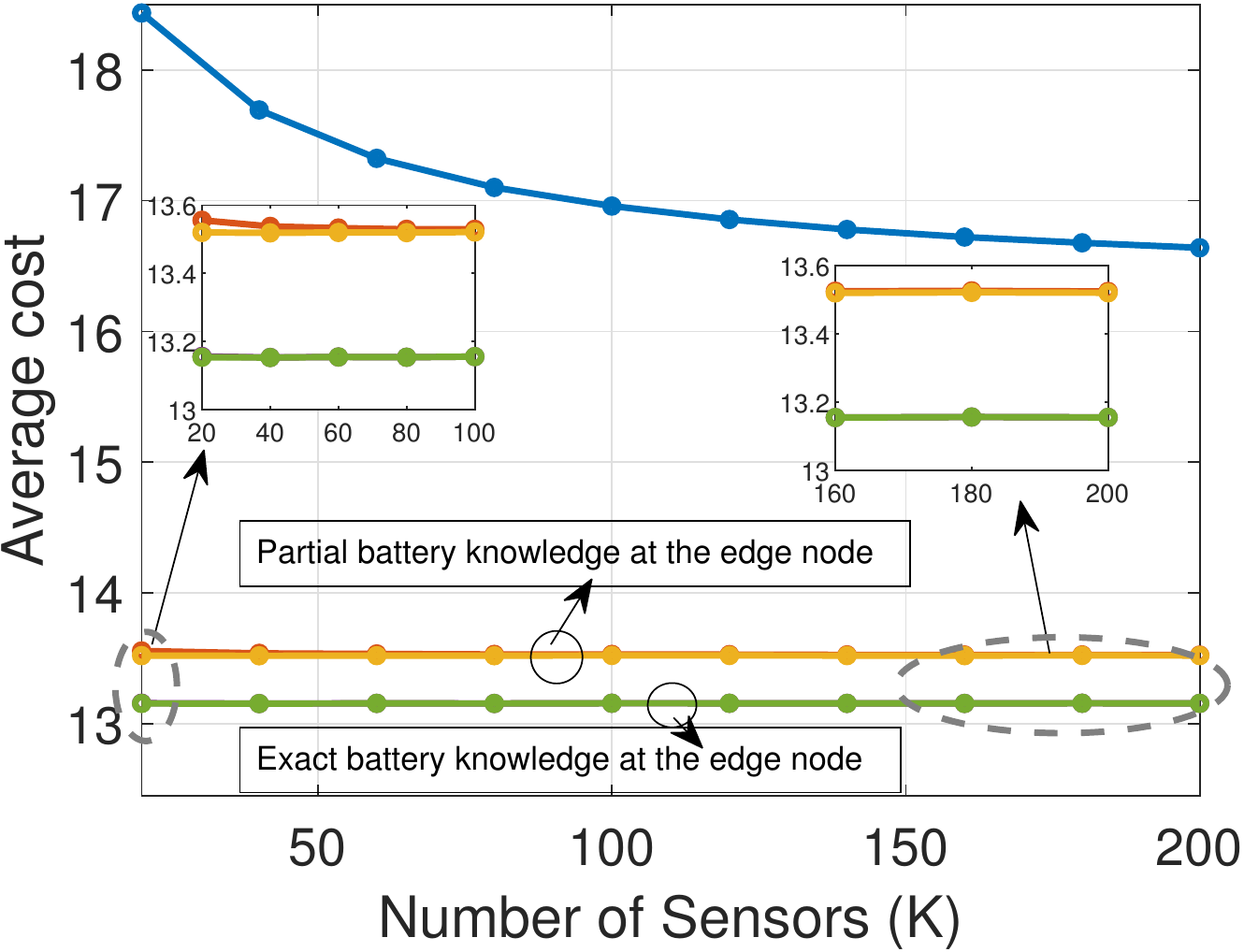}%
}
\vspace{-4mm}
\caption{Performance of the proposed {relax-then-truncate approach} in terms of average cost with respect to the number of sensors $K$.
}\vspace{-5mm}
\label{fig_perf_alpha_fixed}
\end{figure}

$\text{Fig.\ \ref{fig_perf_alpha_fixed}}$ depicts the performance of the relax-then-truncate algorithm with respect to the number of sensors $K$ for different values of normalized transmission budget $\Gamma$. The results are obtained by averaging each algorithm over $10$ episodes, each of length $10^7$ slots.
{First, the proposed algorithm reduces the average cost by approximately $30~\%$ compared to the greedy policy.}
Due to asymptotic optimality of the proposed algorithm, 
the gap between the proposed policy and the lower bound is very small for large values of $K$; the same holds true for the exact battery knowledge (see also \cite{hatami2022JointTcom}).
Interestingly, both relax-then-truncate approaches perform close to the optimal solutions even for moderate numbers of sensors.
{Moreover, \mbox{$\text{Figs.\ \ref{fig_perf_alpha_fixed}(a) and (b)}$} show that for large $\Gamma$, the proposed policy approaches the optimal performance for smaller values of $K$.}
This is because the proportion of the sensors that can be commanded 
at each slot increases as $\Gamma$ increases, and thus, the proportion of truncated sensors (i.e., those that are not commanded under $\tilde{\pi}$ compared to $\pi^\star_{\R}$) decreases.
{Furthermore, the performance of the proposed approach is not too far from the performance under the exact battery knowledge; this relatively small gap shows the impact of the uncertainty about the sensors' battery levels.}

\begin{figure}[t]
\centering
\subfigure[]{
\includegraphics[width=\fw \columnwidth]{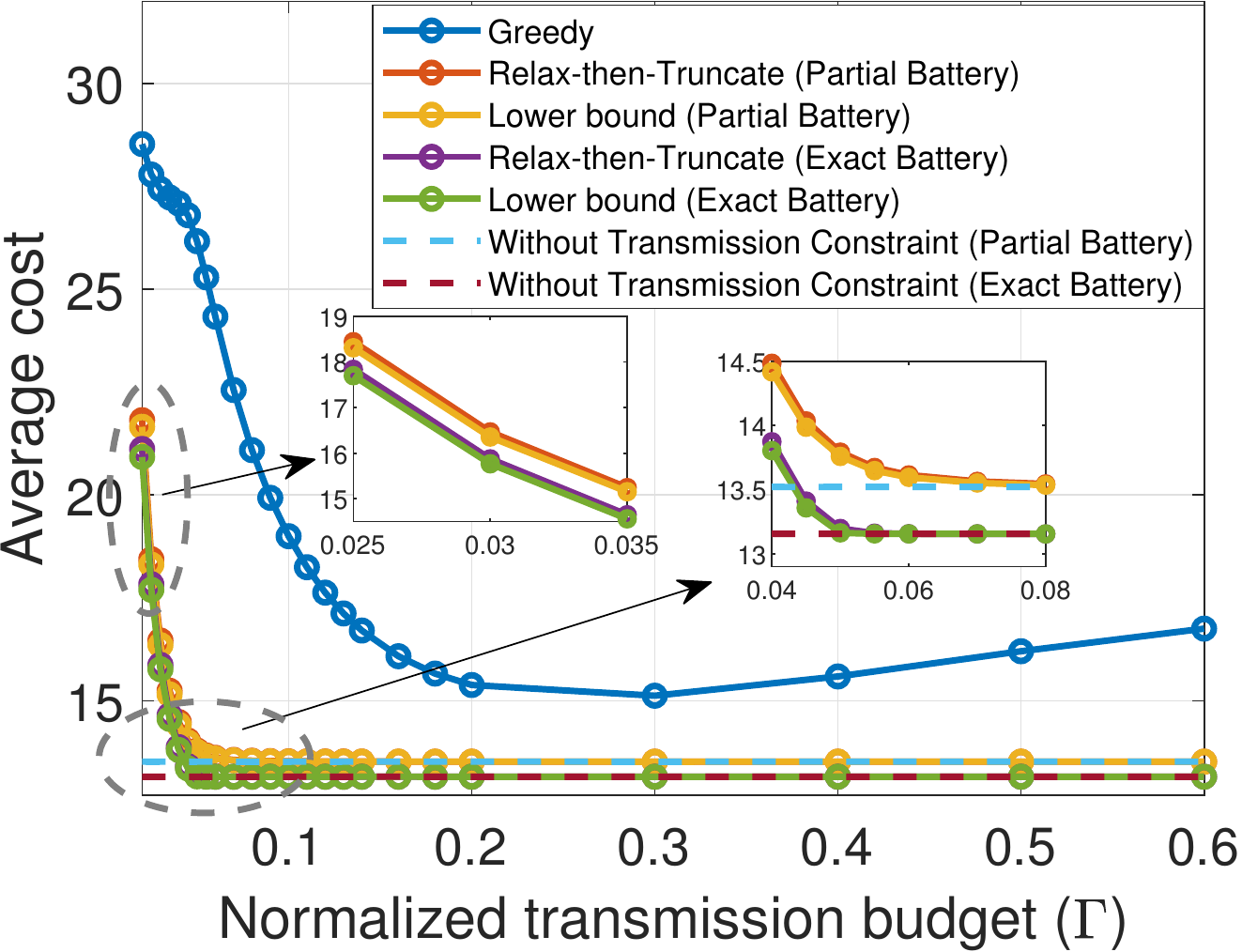}
}\vspace{-3mm}
\subfigure[]{
\includegraphics[width=\fw \columnwidth, trim={0 0 0 0},clip]{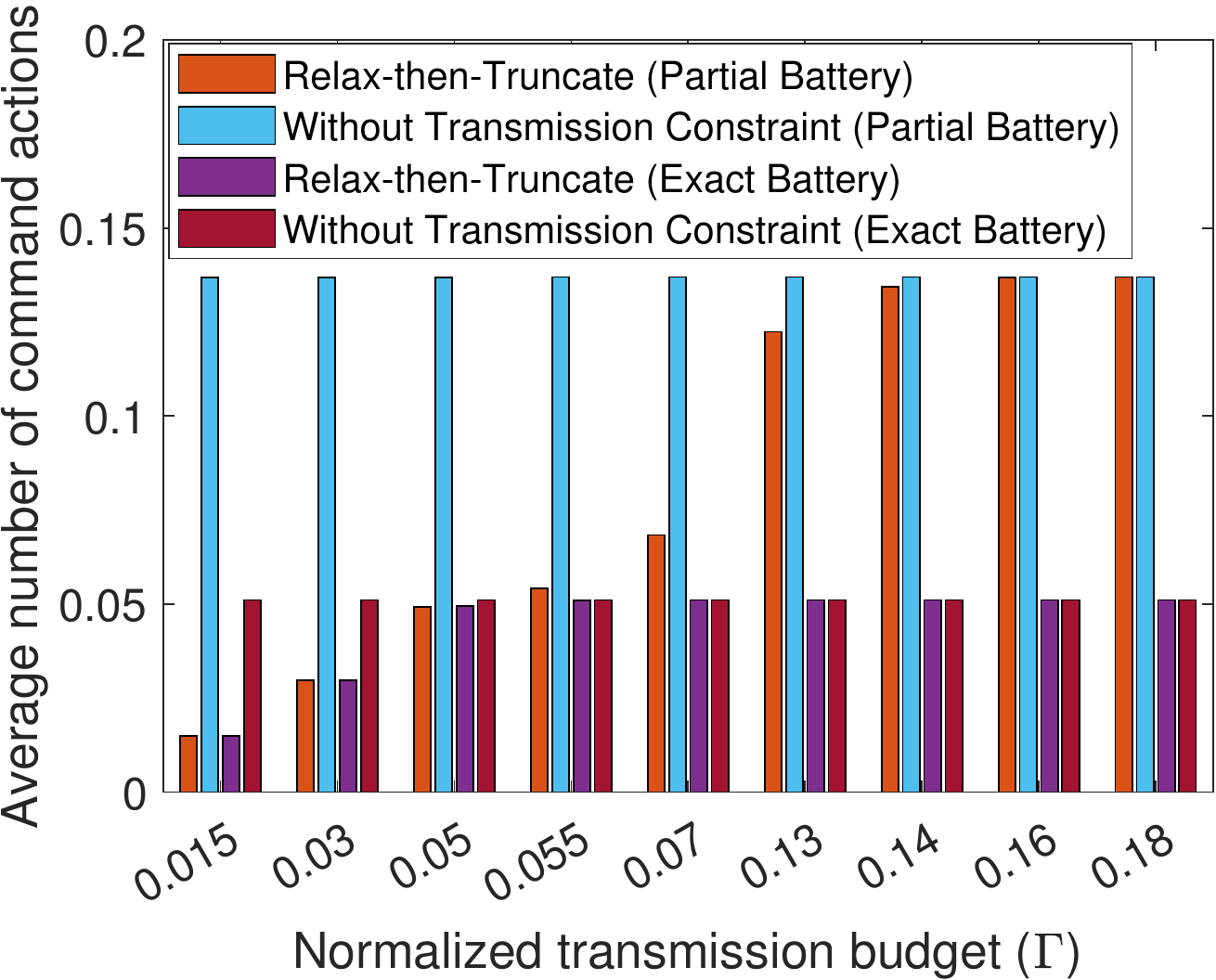}
}
\vspace{-3mm}
\caption{(a) Average cost and (b) Average number of command actions with respect to $\Gamma$ when ${K = 1000}$.}
\vspace{-5mm}
\label{fig_perf_K_fixed}
\end{figure}

$\text{Fig.\ \ref{fig_perf_K_fixed}(a)}$ and $\text{Fig.\ \ref{fig_perf_K_fixed}(b)}$
illustrate the average cost and the average number of command actions, respectively, with respect to the normalized transmission budget $\Gamma$. 
For the benchmarking, we plot the performance of an optimal policy for the case with no transmission constraint (i.e., ${N = K}$) \cite{hatami2021spawc,hatami2022spawc_partialbattery}.
As shown in $\text{Fig.\ \ref{fig_perf_K_fixed}(a)}$, the average cost for the proposed algorithm decreases as $\Gamma$ increases. This is because, for fixed $K$, the transmission budget $N$ increases by increasing $\Gamma$, and thus, the edge node can command more sensors at each slot to serve the users with fresh statuses more often.
Interestingly, there is a point after which increasing $\Gamma$ does not decrease the average cost. 
This is because, as shown in $\text{Fig.\ \ref{fig_perf_K_fixed}(b)}$, the average number of command actions stops increasing {(after ${\Gamma \geq 0.055}$ and ${\Gamma \geq 0.16}$ for the exact and partial battery knowledge, respectively)}, i.e., the constraint \eqref{st_opt2} becomes inactive, meaning that the edge node has more transmission budget than needed.
In these cases, the limited availability of energy at the EH sensors becomes a dominant factor in restraining the transmission of fresh status updates. 





\section{Conclusion}\label{sec_conclusions}
{We studied
on-demand AoI minimization in a multi-sensor EH IoT network where the status updating procedure leads to partial knowledge about the sensors' battery levels at the edge node.} 
We developed a low-complexity relax-then-truncate algorithm and proved that it is asymptotically optimal as the number of sensors goes to infinity. 
Numerical results showed that the relax-then-truncate algorithm reduces the average on-demand AoI roughly $30~\%$ compared to a request-aware greedy policy and that it has near-optimal performance even for moderate numbers of sensors, {which is} important for emerging IoT networks with hundreds of sensors connected.

\bibliographystyle{IEEEtran}
\bibliography{Bib/conf_short,Bib/IEEEabrv,Bib/Bibliography}

\end{document}